\newtheorem{theorem}{Theorem}[section]
\newtheorem{lemma}[theorem]{Lemma}
\theoremstyle{definition}
\newtheorem{definition}[theorem]{Definition}
\newtheorem{example}[theorem]{Example}
\theoremstyle{remark}
\newtheorem{remark}[theorem]{Remark}
\numberwithin{equation}{section}
\begin{document}
\setcounter{page}{1}

%-------------------------- Pleased do not change the following line-------------------------------------------
%\noindent \textcolor[rgb]{0.99,0.00,0.00}{This is a submission to one of journals of TMRG: BJMA or AFA}\\[.5in]
%--------------------------------------------------------------------------------------------------------------

\title[Volterra equations with piecewise continuous kernels]
{Solution to the Volterra integral equations of the first kind with piecewise continuous kernels
in class of Sobolev-Schwartz distributions}

\author[D.N.Sidorov]{Denis N. Sidorov$^{1,2}$ }

\address{$^{1}$ Department of Applied Mathematics, Melentiev Energy Systems Institute of 
%\newline
Siberian Branch of Russian Academy of Sciences, Irkutsk 664033, Irkutsk, Russia.}
\address{$^{2}$ Institute of Mathematics, Economics and Informatics, Irkutsk State University, 664003, Irkutsk, Russia.}

\email{\textcolor[rgb]{0.00,0.00,0.84}{contact.dns@gmail.com}}

\subjclass[2010]{Primary 445D05; Secondary 46F99.}

\keywords{Volterra equations of the first kind,  successive approximations, Sobolev-Schwartz theory,  distributions, asymptotics.}

\date{Received: xxxxxx; Revised: yyyyyy; Accepted: zzzzzz.}
%\newline \indent $^{*}$ Corresponding author}

\begin{abstract}
Sufficient conditions for existence and uniqueness of the solution of the 
Volterra integral equations of the first kind with piecewise continuous kernels
are derived  in framework of Sobolev-Schwartz distribution theory.
The asymptotic approximation of the parametric
family of generalized solutions is constructed. The  method
for the solution's regular part refinement is proposed using 
the successive approximations method.
\end{abstract} \maketitle
%%%%%%%%%%%%%%%%%%%%%%%%%%%%%%%%%%%%%%%%%%%%%%%%%%%%%%%%%%%%%%%%%%%%%%%%%%%%%%%%%%%%%%%%%%%

\section{Introduction}

Let us define the triangular region $D = \{s,t; 0<s<t<T \}$
and introduce the  functions $s=\alpha_i(t), i=\overline{1,n},$ 
which are continuous and have continuous derivatives for
$t \in (0,T).$
We suppose $\alpha_i(0)=0,$ $0<\alpha_1(t)< \dots <
\alpha_{n-1}(t) < t$  for $t \in (0,T),$
$0< \alpha_1^{\prime}(0)< \dots < \alpha_{n-1}^{\prime}(0)<1,$
and functions $s=\alpha_i(t), \, i=\overline{0,n},$
 $\alpha_0(t)=0, \, \alpha_n(t)=t,$
 split the region $D$ into the following disjoint sectors 
 $D_1=\{s, t: 0\leq s <\alpha_1(t) \},$
$D_i = \{s,t : \alpha_{i-1}(t) < s < \alpha_i(t) , i=\overline{2,n}  \},$
$ \overline{D}=\overline{\bigcup\limits_1^n D_i}.$
Let us introduce the continuous functions  $K_i(t,s)$
defined for $t,s \in D_i,$ and differentiable wrt 
 $t,$
$i=\overline{1,n}.$
% page 2

Let us consider the integral operator 
$$\int\limits_0^t K(t,s) u(s) ds \stackrel{\mathrm{def}}{=}  \sum\limits_{i=1}^n \int\limits_{\alpha_{i-1}(t)}^{\alpha_i(t)} K_i(t,s) x(s) ds 
\eqno{(1)}$$ 
with piecewise continuous kernels
$$    K(t,s) = \left\{ \begin{array}{ll}
         \mbox{$K_1(t,s), \,\,\, t,s \in D_1$}, \\
         \mbox{\,\dots \,\,\,\,\, \dots \dots} \\
         \mbox{$K_n(t,s), \,\,\, t,s \in D_n$}. \\
        \end{array} \right. \eqno{(2)} $$

In this paper we  deal with the following Volterra integral equation
$$\int\limits_0^t K(t,s) u(s)\,ds =f(t), \, 0<t<T\leq \infty, \eqno{(3)} $$
where function $f(t)$ 
 has a continuous derivative for $t \in (0,T), \, f(0) \neq 0.$
Equation (3) we call the Volterra integral equation (VIE) 
with piecewise continuous kernel.
Our objective is to construct the solution of VIE (3) in 
the space of Sobolev-Schwartz distributions [3]. 
Obviously, VIE (3) does not have
classic solutions since  $f(0) \neq 0.$

The differentiation of VIE (3) leads to integral-functional 
equation and its solution in not unique in the general case [16]. 
That is why  study of VIE (3) can not be performed 
using only the classic methods in the Volterra theory 
[1,6,15,14].
In this paper we continue our results on VIE studies
[11], [8,9], [17,19]. We consider the equation (3) 
 using the elementary results of the theory of  integral and difference equations,
functional analysis, Sobolev-Schwartz distributions and
theory of functional equations with perturbed argument of 
neutral type [7]. 

This paper is organized as follows.
Section 2 outlines the construction of the singular component of the solution
and the integral-functional equation for the regular component of the solution
is derived. 
In section 3 we obtain the sufficient conditions for existence and
uniqueness of solution of VIE (3) in the following form $u(t) = a \delta(t)+x(t),$
were $\delta(t)$ is Dirac delta function, $x(t)$ is regular continuous 
function.
Such solutions satisfy to the equation (3) in sense of Sobolev-Schwartz distributions [3].
To the best of our knowledge, similar studies on VIE (3) have not yet been reported in literature.
In section 3 we construct the regular part of the solution using the 
the``step method'' [13] from the theory of functional equations and successive 
approximations method.
In sections 4 and 5 we address the most interesting case when 
VIE (3) has family of  solutions depending on free parameters.
  The method for construction of asymptotic
approximations of parametric solutions is proposed 
and iterative refinement method is constructed.
It is to be noted that known method of  A.O.~Gelfond (readers may refer to [4],\, p.338) 
of solution of difference equations is employed.

\section{Definition of the singular component of the solution}

Let us extend $f(t)$ on negative semiaxis with zero and differentiation of VIE (3) 
yields the following equivalent equation
$$F(u)\stackrel{\mathrm{def}}{=} K_n(t,t)u(t) + \sum\limits_{i=1}^{n-1} \alpha_i^{\prime}(t)
\biggl\{ K_i(t,\alpha_i(t)) -  \eqno{(4)}$$
$$-K_{i+1}(t,\alpha_i(t)) \biggr\} u(\alpha_i(t)) + \sum\limits_{i=1}^n \int\limits_{\alpha_{i-1}(t)}^{\alpha_i(t)} K_i^{(1)}(t,s) u(s) \,ds =  f^{(1)}(t) + f(0)\delta(t),  $$
where $\alpha_0=0, \, \alpha_n(t)=t.$
Le us assume   $K_1(0,0)\neq 0, \, K_n(t,t)\neq 0,  $
for $t\in [0,T].$
Let us introduce the following functional operator
$$Au \stackrel{\mathrm{def}}{=} \sum\limits_{i=1}^{n-1} K_n^{-1}(t,t)
\alpha_i^{\prime}(t) \{ K_i(t,\alpha_i(t)) - K_{i+1}(t,\alpha_i(t)) \} u(\alpha_i(t))  $$
and integral operator
$Ku \stackrel{\mathrm{def}}{=}  \sum\limits_{i=1}^n \int\limits_{\alpha_{i-1}(t)}^{\alpha_i(t)} K_n^{-1}(t,t)
K_i^{(1)}(t,s) u(s) ds.$

Then equation (4) can be reduced to the following  equation
$$u(t) + Au + Ku = K_n^{-1}(t,t) f^{(1)}(t) + K_n^{-1}(0,0)f(0) \delta(t). \eqno{(5)}$$
Let us search for a solution to VIE (3.2) of the form 
$u(t) = a \delta(t) + x(t),$
where $a $ is constant,  $x(t) \in \mathrm{C}_{(0,T)}.$

It is easy to verify the following identities:
$$ \int\limits_0^{\alpha_1(t)} \frac{\partial K_1(t,s)}{\partial t} \delta(s) ds =
\frac{\partial K_1(t,0)}{\partial t},$$
$$\int\limits_{\alpha_{i-1}(t)}^{\alpha_i(t)} \frac{\partial K_i(t,s)}{\partial t} \delta(s) ds = 0 $$
for $i=\overline{2,n}.$
Indeed, the first identity holds because
 $\alpha_1(t)>0,$ $\frac{\partial K_1(t,s)}{\partial t} \delta(s) = \frac{\partial K_1(t,0)}{\partial t}
\delta(s),$ $\int\limits_0^{\alpha_1(t)} \delta(s) ds = \theta(\alpha_i(t)) = 1$
for $t>0,$
were $\theta $ is Heaviside function.
The second identity is becomes trivial if we notice that for 
$i=\overline{2,n}$
 $supp \, \delta(s) \cap D_i =0,$
$\int\limits_{\alpha_{i-1}(t)}^{\alpha_i(t)} \delta(s) ds = \theta (\alpha_i(t)) - \theta(\alpha_{i-1}(t)) = 0,$
since $0<\alpha_1(t) < \alpha_2(t) < \dots < \alpha_n(t) = t.$
Let us also recall the identity $\delta(\alpha_i(t)) = \frac{\delta(t)}{|\alpha_i^{\prime}(0)|}$  (refer, e.g. to [3],  p. 34).
Let us take into account the outlined identities and substitution  $u = a \delta(t) + x(t)$
leads the equation (5) to the following equation
$K_n^{-1}(0,0) K_1(0,0) a \delta(t) + K_n^{-1}(t,t) \frac{\partial K_1(t,0)}{\partial t} a +
x(t) + Ax + Kx = K_n^{-1}(t,t) f^{(1)}(t) + K_n^{-1}(0,0)f(0) \delta(t).$
Equating the last equation coefficients of $\delta(t)$ result
$a = \frac{f(0)}{K_1(0,0)}.$  
It is remains to determine the regular part
from the equation
$$
x(t) + A x + K x = \overline{f}(t),
\eqno{(6)}
$$
where $\overline{f}(t) = K_n^{-1}(t,t) \biggl\{ f^{(1)}(t) - \frac{\partial K_1(t,0)}{\partial t} 
\frac{f(0)}{K_1(0,0)}\biggr\}.$
It is to be noted that due to the operator equality 
 $$K_n(t,t)(I + A + K)x = F(x) $$
the equation (6) can be written as follows
$$F(x) = f^{\prime}(t) - \frac{\partial K_1(t,0)}{\partial t} \frac{f(0)}{K_1(0,0)}. 
\eqno{(7)}$$

\section{Sufficient conditions for the existence of a unique generalized solution}

Since $K_1(0,0) \neq 0$ then homogeneous equation (4)
has only trivial solution of singular functions
 $$u_{sing} \stackrel{\mathrm{def}}{=}  \sum\limits_0^m c_i \delta^{(i)}(t)$$
with support at the origin.  Therefore, the existence and uniqueness of generalized 
solutions of the equation (4) 
$$u(t)=u_{sing} + x(t), $$  $x(t) \in \mathbb{C}_{(0,T)}$
is equivalent to proving the existence of
unique solution of equation (6) in 
 $\mathbb{C}_{(0,T)}.$
Let us introduce the function
$$\,\,\,\,\,\,\,\,\,\,\,\,\,\,|A(t)|  \stackrel{\mathrm{def}}{=}  \biggl|\sum\limits_{i=1}^{n-1}
\alpha_i^{(1)}(t) K_n^{-1}(t,t) \biggl\{ K_i(t,\alpha_i(t)) - K_{i+1}(t,\alpha_i(t)) \biggr\}\biggr|. \,\,\,\,\,\,\,\,\,\,\,\,\,\,\,\,\,\,\,\,\, (\ast) $$
Let the following condition be fulfilled\\
\begin{enumerate} [{\bf (A)}]
\item $| A(0) |<1,$  $\sup\limits_{0<s<t<T} |K_n^{-1}(t,t)K(t,s) | \leq c <\infty.$
\end{enumerate}
Condition (A) is fulfilled if $\alpha_i^{(1)}(0)$ are sufficiently small. 
Here and below the kernel $K(t,s)$  in $\bigcup\limits_1^n D_i$ 
is defined as (2). It's derivative wrt $t$ for $t,s \in \bigcup\limits_1^n D_i$ 
is defined as follows:
$$    K^{(1)}(t,s) = \left\{ \begin{array}{ll}
         \mbox{$K_1^{(1)}(t,s), \,\,\, t,s \in D_1$}, \\
         \mbox{\,\dots \,\,\,\,\, \dots \dots} \\
         \mbox{$K_n^{(1)}(t,s), \,\,\, t,s \in D_n$}. \\
        \end{array} \right.   $$

\begin{theorem} { (Sufficient conditions for existence and uniqueness of generalized solutions).}

Let the condition { (A)} be fulfilled, $K_i(t,s)$ in (2) are continuous
functions, are they have continuous derivatives wrt $t,$
function $f(t)$ has continuous derivative, $f(0)\neq 0.$
Let $K_1(0,0) \neq 0.$
Then equation (3) has the unique solution
$$u(t) = \frac{f(0)}{K_1(0,0)} \delta(t) +x(t),$$
where $x(t) \in \mathbb{C}_{(0,T)}.$ At the same time we can find $x(t)$ 
using the step method combined with
successive approximations.
\end{theorem}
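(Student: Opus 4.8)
The plan is to build on the reduction already carried out in the excerpt. Since $K_1(0,0)\neq 0$ forces the singular part of any solution to be exactly $\frac{f(0)}{K_1(0,0)}\delta(t)$ with no surviving higher-order terms $\delta^{(i)}(t)$, the whole problem collapses to proving that the integral--functional equation (6), namely $x+Ax+Kx=\overline{f}$, possesses a unique solution in $\mathbb{C}_{(0,T)}$. I would recast (6) as the fixed-point problem $x=\mathcal{T}x$ with $\mathcal{T}x\stackrel{\mathrm{def}}{=}\overline{f}-Ax-Kx$, and then solve it in two stages dictated by the geometry of the deviations $\alpha_i(t)$: first a contraction argument on a short initial interval $[0,\delta]$, where all the retarded arguments accumulate, and afterwards the step method on $[\delta,T)$, where the delays are bounded below.

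For the initial interval I would exploit that $A$ is a purely functional (non-smoothing) operator while $K$ is of Volterra type. Estimating on $\mathbb{C}[0,\delta]$ gives $\|\mathcal{T}x-\mathcal{T}y\|_{[0,\delta]}\le\big(\sup_{0\le t\le\delta}|A(t)|+c\,\delta\big)\|x-y\|_{[0,\delta]}$, where the first term is the Lipschitz constant of $A$ furnished by the coefficient $(\ast)$ and the second comes from the Volterra bound $|K(x-y)(t)|\le c\,t\,\|x-y\|$ supplied by the boundedness in condition (A). Because the coefficients of $A$ are continuous and $|A(0)|<1$, I can fix $\delta>0$ so small that $\sup_{[0,\delta]}|A(t)|<1$ and $c\,\delta$ is negligible, making $\mathcal{T}$ a contraction. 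The Banach fixed-point theorem then produces a unique $x\in\mathbb{C}[0,\delta]$, realized as the limit of the successive approximations $x_{m+1}=\mathcal{T}x_m$.

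To continue the solution beyond $\delta$ I would invoke the step method. On any compact $[\delta,T']\subset(0,T)$ the strict inequality $\alpha_{n-1}(t)<t$ yields a uniform lower bound $t-\alpha_{n-1}(t)\ge\tau>0$ on the largest deviation. Setting $\delta_0=\delta$ and $\delta_{k+1}=\alpha_{n-1}^{-1}(\delta_k)$, each gap satisfies $\delta_{k+1}-\delta_k=\delta_{k+1}-\alpha_{n-1}(\delta_{k+1})\ge\tau$, so finitely many slabs $[\delta_k,\delta_{k+1}]$ cover $[\delta,T']$. On each slab every argument $\alpha_i(t)$ with $i\le n-1$ lies in $[0,\delta_k]$, where $x$ is already known; hence $Ax$ is a known continuous function and, after splitting the integrals of $Kx$ at $s=\delta_k$, equation (6) reduces to a Volterra equation of the second kind for $x|_{[\delta_k,\delta_{k+1}]}$ with continuous kernel $K_n^{-1}(t,t)K_n^{(1)}(t,s)$ and continuous data. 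Its unique continuous solution follows from the classical resolvent construction, whose successive approximations converge by the boundedness in condition (A). Splicing the slabs yields $x\in\mathbb{C}_{(0,T)}$, and uniqueness at every stage gives global uniqueness, establishing the unique generalized solution $u=\frac{f(0)}{K_1(0,0)}\delta(t)+x(t)$.

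I expect the genuine difficulty to be concentrated entirely at the origin. There $\alpha_i(0)=0$ collapses all retarded arguments to a single point and the delay $t-\alpha_i(t)$ vanishes, so the step method cannot even be started and one is forced to control the functional operator $A$ directly. This is exactly where the hypothesis $|A(0)|<1$ is essential: it is the precise condition ensuring that the ``frozen'' operator $I+A$ at $t=0$ is a contractive perturbation of the identity, so that the successive approximations converge on $[0,\delta]$ despite the accumulation of the $\alpha_i(t)$ at zero. Away from the origin the problem is soft, being a finite chain of classical second-kind Volterra problems.
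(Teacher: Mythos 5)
Your proposal is correct and follows essentially the same route as the paper: a contraction argument for $x+Ax+Kx=\overline f$ on a short initial interval $[0,h]$ using $|A(0)|<1$ together with the smallness of the Volterra term, followed by the step method, on each subsequent interval treating $Ax$ as known data and solving a classical second-kind Volterra equation by successive approximations. The only difference is cosmetic: the paper uses uniform steps of length $\varepsilon h$ under the condition $0<\alpha_i'(t)<\tfrac{1}{1+\varepsilon}$ to guarantee that the deviated arguments fall into previously treated intervals, whereas you generate the partition via $\delta_{k+1}=\alpha_{n-1}^{-1}(\delta_k)$ and a compactness bound on the delay, which accomplishes the same thing.
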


\begin{proof} 

Since the singular part of the solution is defined 
then let us consider the equation (6)  satisfied by the regular
component $x(t).$

Let us fix $q<1$ and select $h_1>0$ such as $\sup\limits_{0\leq t \leq h_1}|A(t)| = q<1.$
Due to the condition {(A)} such a variable $h_1>0$ exists.
Let $0<h<\min \{ h_1, \frac{1-q}{c} \},$
where variable $c$ is defined in condition {(A)}.
Let us divide the interval $[0,T]$ into subintervals
$$
[0,h], \, [h, h+\varepsilon h], \, [h+\varepsilon h, h+ 2\varepsilon h], \dots .
\eqno{(8)}
$$
We denote by $x_0(t)$ the restriction of the solution $x(t)$
into $[0,h],$ and by $x_m(t)$ we denote it's restriction into  subintervals 
$$I_m = [(1+(m-1)\varepsilon)h, (1+m\varepsilon)h],\, m=1,2,\dots .$$
Let us select $\varepsilon$  from $(0,1]$ such as for $t\in I_m$ 
``perturbed'' arguments $\alpha_i(t) \in \bigcup\limits_{k=1}^{m-1} I_k, \, i=\overline{1,n-1}.$
If $0<\alpha_i^{(1)}(t) < \frac{1}{1+\varepsilon}$
for $t \in [0,T), \, i=\overline{1,n-1},$ 
then the above inclusion holds in the interval $[0,T).$ 
This inclusion makes it possible to apply the well-known 
in the theory of functional differential equations ``step method''.
The readers may refer to [13],\, p. 199.

Let us construct the sequence $\{x_0^n(t)\}:$
$$x_0^n (t) = -A x_0^{n-1} - K x_0^{n-1} + \overline{f} (t), $$
$$x_0^0(t) = \overline{f}(t), \, t \in [0,h]. $$
to define $x_0(t) \in \mathbb{C}_{[0,h]}$

Due to the selection of $h$ we have an estimate $||A+K||_{\mathcal{L}(\mathbb{C}_{(0,h)} \rightarrow \mathbb{C}_{(0,h)})} < 1.$

Therefore for $t\in [0,h]$ exist  unique solution $x_0(t)$ of equation (6). 
The sequence 
$x_0^n(t)$ uniformly converge to the solution.
We continue the process of constructing the desired solution for
 $t\geq h,$ i.e. on the intervals 
$I_n, \, n=1,2, \dots .$
For sake of clarity let $\varepsilon = 1$ in (8).

Once we get the element  $x_0(t) \in \mathbb{C}_{[0,h]}$
computed we will look for an element $x_1(t) $ 
in the space $\mathbb{C}_{(h,2h)}$.
We will find $x_1(t)$ from the Volterra integral
equation of the 2nd  kind
$$x(t) + \int\limits_h^t K_n^{-1}(t,t) K_t^{\prime}(t,s) x(s) \,ds = \overline{f}(t) - A x_0 - \int\limits_{0}^h K_n^{-1}(t,t)
K_t^{\prime}(t,s) {x}_0(s) \,ds$$
using the successive approximations, with $x_0(h) = x_1(h).$

Let us introduce the continuous function
 $$
    \overline{x}_1(t) = \left\{ \begin{array}{ll}
         \mbox{$x_0(t), \,\,\,\, {0} \leq t \leq h$}, \\
         \mbox{$x_1(t), \,\, h \leq t \leq 2h $}, \\
\end{array}          
\right.
\eqno{(9)}
 $$
which is the reduction of continuous solution $x(t)$  on to $[0,2h].$
Then we can find element $x_2(t) \in$ $\mathbb{C}_{(2h,3h)}$
using the successive approximations 
from the Volterra integral equation of the 2nd kind
$$x(t) + \int\limits_{2h}^t K_n^{-1}(t,t) K_t^{\prime}(t,s) x(s) \, ds = $$ $$= \overline{f}(t) - A \overline{x}_1
- \int\limits_{0}^{2h} K_n^{-1}(t,t) K_t^{\prime}(t,s) \overline{x}_1(s)\, ds.$$
Continuing this process for $N$ steps ($ N \geq \frac {T} {h} $) 

We will construct the desired solution $x(t) \in \mathbb{C}_{(0,T)}$ of VIE (3)
using this process for $N$ steps.

This completes the proof of the theorem.
\end{proof} 

\section{Construction of an asymptotic approximation $\hat{x}(t)$ 
of the regular part of parametric family of the desired solution }

Let us consider the equation (7) which is satisfied by the regular part of
generalized solution.
Let the following condition be fulfilled
\begin{enumerate} [{\bf (B)}]
\item Exist  polynomials $\mathcal{P}_i = \sum\limits_{\nu+\mu = 1}^N K_{i \nu \mu} t^{\nu} s^{\mu}, \, i=\overline{1,n},$\\  $f^N(t) = \sum\limits_{\nu=1}^N f_{\nu} t^{\nu},$
 $\alpha_i^N(t) = \sum\limits_{\nu = 1}^N \alpha_{i \nu} t^{\nu}, i=\overline{1,n-1},$
where
$0< \alpha_{11} < \alpha_{21}  < \dots < \alpha_{n-1,1} <1,$
such as for $t\rightarrow +0,$ $s\rightarrow +0$ we have the following estimates
$| K_i(t,s) -\mathcal{P}_i(t,s) | = \mathcal{O}((t+s)^{N+1}), \, i=\overline{1,n},$ 
$|f(t) - f^N(t)| = \mathcal{O}(t^{N+1}),$  $|\alpha_i(t) - \alpha_i^N(t)| = \mathcal{O}(t^{N+1}), i=\overline{1,n-1}.$

\end{enumerate}

Expansion in powers of $t, s$ which are presented in the condition {(B)} we will call as
  ``Taylor polynomials'' of the corresponding functions.
Let us introduce the function
$$
B(j) = K_n(0,0) + \sum\limits_{i=1}^{n-1} (\alpha_i^{\prime}(0))^{1+j}(K_i(0,0)-K_{i+1}(0,0)),
$$
which depends on  integer argument $j,$ $j \in \mathbb{N} \cup {0}.$ Function
 $B(j)$ which corresponds to the main ``functional'' part of the equation (7) 
is called as {\it characteristic function} of equation (7).
Let us consider the construction of an asymptotic solution of equation (7).

In contrast to section 3, in sections 4 and 5 it is not supposed that 
homogeneous 
 equation for equation (3) has only trivial solution. 
Therefore the solution of integral-functional
equation (7) can be non unique. Let us follow paper  [5] and search 
for the asymptotic approximation of 
 a particular solution of the inhomogeneous equation
 (7)
as following polynomial
$$
\hat{x}(t) = \sum\limits_{j=0}^N x_j (\ln t) t^j.
\eqno{(10)}
$$
Let us demonstrate that coefficients  $x_j$ depends on $\ln t$ 
and free parameters in  general irregular  case. This is consistent with the possibility of the existence of nontrivial solutions of the homogeneous equation.

For computation of the coefficients $x_j$ we consider regular and irregular cases.
\begin{definition}
Point $j^*$ is called  {\it regular  point} of characteristic function $B(j),$
if $B(j^*) \neq 0$ and {\it irregular point}  otherwise.
\end{definition}

%page 4
\subsection{The regular case: characteristic function $B(j) \neq 0$  for  $j\in (0,1, \dots ,N),$  where $N$ is sufficiently  large }

In this case, the coefficients of  $x_j$ will be constant, i.e. independent on $ \ln t$.
Indeed, lets substitute the expansion (10) into the equation (7). Using the method of undetermined coefficients and taking into account the conditions {(B)},
lead  to the recursive sequence of the systems of linear algebraic
equations wrt  $x_j:$
$$
B(0)x_0 = f^{\prime}(0) - \frac{f(0)}{K_1(0,0)} - \frac{f(0)}{K_1(0,0)} \frac{\partial K_1(t,0)}{\partial t} \biggr|_{t=0},
\eqno{(11)}
$$
$$
B(j)x_j = M_j(x_0, \dots , x_{j-1}), \, j=1,\dots , N.
\eqno{(12)}
$$
 $M_j$  are expressed in terms of solutions
$x_0, \dots , x_{j-1}$ of previous equations and coefficients of
the Taylor polynomials from the condition {(B)}.

Since in the regular case  $B(j) \neq 0$ the coefficients
$x_0, \dots , x_N$ can be uniquely determined and  the asymptotics (10) can be constructed
by this means.

\subsection{Irregular case: characteristic function $B(j)$ in  $(0,1,\dots,N)$ has zeros}

Let us demonstrate that in irregular case the coefficients $x_j$
are polynomials in powers of $\ln t$ and depends upon
arbitrary constants. The order of polynomials and the number of arbitrary constants are related to the multiplicities of integer solutions of the equation $B(j) = 0$.

Indeed, since the coefficient $x_0$ in the irregular case
can depend on  
$\ln t,$ then based on the method of undetermined 
coefficients $x_0$  can be found as the solution of the difference equation
  $$
K_n(0,0) x_0(z) + \sum\limits_{i=1}^{n-1} \alpha_i^{\prime}(0) (K_i(0,0) - K_{i+1}(0,0)) x_0(z+a_i) = \eqno{(13)}$$
$$ = f^{\prime} (0) - \frac{f(0)}{K_1(0,0)} \frac{K_1(t,0)}{\partial t}\biggr|_{t=0},
$$
where $a_i = \ln \alpha^{\prime}(0), z = \ln t.$
There are three possible cases here:

\noindent {\it 1st case.} $(B(0)\neq 0).$ \\  
In this case the coefficient $x_0$ does not depend on $z$
and can be determined uniquely from the equation (11).

\noindent {\it 2nd case.} $(B(0)=0).$\\
Let $j=0$ is simple zero of the function $B(j),$ 
i.e. $B(0)=0, \, B^{\prime}(0) \neq 0.$  Then the coefficient $x_0(z)$
we can find from the difference equation (13)
as linear function
$$
x_0(z) = x_{01} z + x_{02}. \eqno{(14)}
$$
Lets substitute (14) into  (13). Thus for determination of the coefficients $ x_{01}, x_{02} $
we obtain two equations as follows:
$$
B(0) x_{01} = 0,
\eqno{(15)}
$$
$$
B(0) x_{02} + B^{(1)}(0) x_{01}  =  f^{\prime}(0) - \frac{f(0)}{K_1(0,0)} \frac{\partial K_1(t,0)}{\partial t} \biggr|_{t=0}, \eqno{(16)}
$$
where $B(0)=0, \, B^{(1)}(0)\neq 0.$
Hence the coefficient $x_0(z)$ is linear wrt  $z$ and depends on the arbitrary constant.
So, it the 2nd case $$x_0(z) = \biggl ( f^{(1)}(0) - \frac{f(0)}{K_1(0,0)} \frac{\partial K_1(0,0)}{\partial t} \biggr) \frac{1}{B^{(1)}(0)} z +c, \,$$  where $c$ is arbitrary constant. 

\noindent {\it 3rd case. } Let $j=0$ be root of the equation $B(j)=0$ with order of multiplicity of $k+1,$
i.e. $B(0)=B^{\prime}(0) = \dots B^{(k)}(0)=0, \, B^{(k+1)}(0) \neq 0,$
 $k \geq 1.$ Solution $x_0(z)$ of the difference equation (12)
we search in the form of a polynomial
$$
x_0(z) = x_{01} z^{k+1} + x_{02} z^k + \dots + x_{0 k+1} z + x_{0 k+2}.
\eqno{(17)}
$$
Let us substitute the polynomial (17) into the equation (13)
and take into account the identity $$\frac{d^k}{dj^k}B(j) = \sum\limits_{i=1}^{n-1} (\alpha_i^{\prime}(0))^{1+j} a_i^k (K_i(0,0) - K_{i+1}(0,0)),$$
where $a_i = \ln \alpha_i^{\prime}(0).$
Next lets equate the coefficients of powers $$ z^{k+1}, z^k, \dots , z, z^0$$ to zero.
Finally we  get recurrent sequence of linear algebraic equations wrt $x_{01}, x_{02}, \dots , x_{0 k+2}:$
$$ \left\{ \begin{array}{ll} 
B(0) x_{01} = 0,\\
B(0) x_{02} + B^{(1)}(0) 
\biggl(\begin{array}{c}
	k+1\\ k
	\end{array}
	\biggr) x_{01} = 0,  \,\,\,\,\,\,\,\,\,\,\,\,\,\,\,\,\,\,\,\,\,\,\,\,\,\,\,\,\,\,\,\,\,\,\,\,\,\,\,\,\,\,\,\,\,\,\,\,\,\,\,\,\,\,\,\,\,\,\,\,\,\,\,\,\,\,\,\,\,\,\,\,\,\,\,\,\,\,\,\,\,\,{(18)}\\
	B(0) x_{0 l+1} + B^{(l)}(0) \biggl(\begin{array}{c}
	k+1\\ k+1-l
	\end{array}
	\biggr) x_{01} + B^{(l-1)}(0) \biggl(\begin{array}{c}
	k\\ k+1-l
	\end{array}
	\biggr) x_{02} + \dots  \\ \dots + B^{(1)}(0) 
	\biggl(\begin{array}{c}
	k+1-l+1\\ k+1-l
	\end{array}
	\biggr) x_{ol} = 0, \, l=1,\dots , k,
\end{array} \right . 
$$
$$ B(0) x_{0 k+2} + B^{(k+1)}(0) x_{01} + B^{(k)}(0) x_{02} + \dots B^{(1)}(0) x_{0 k+1} =
\eqno{(19)}
$$
$$ = f^{\prime}(0) - \frac{f(0)}{K_1(0,0)} \frac{\partial K_1(t,0)}{\partial t} \biggr|_{t=0}.
$$
In our case  $B(0) = B^{\prime}(0) = \dots = B^{(k)}(0)=0, \, B^{(k+1)}(0) \neq 0.$ 
Hence in polynomial (17) we let
  $x_{01} = \frac{1}{B^{(k+1)}(0)} \biggl ( f^{\prime}(0) -\frac{f(0)}{K_1(0,0)} \frac{\partial K_1(0,0)}{\partial t} \biggr ).$
Equations of the system (18)  become identities
$B(0)x_{0j}=0, \, j=\overline{1,k+1},$  $B(0)=0.$
Hence the coefficients $x_{02}, \dots , x_{0 k+2}$
 of the polynomial (17) remains arbitrary constants.

Next, let's employ the method of undetermined coefficients and take into account the identity
$$\int t^j \ln^k t\, dt = t^{j+1} \sum\limits_{s=0}^k (-1)^s \frac{k(k-1) \dots (k-(s-1))}{(j+1)^{s+1}} \ln^{k-s} t. $$
 By this means we construct the difference equations for determination of the coefficient
$x_1(z)$  ($z=\ln t$) and next coefficients of the asymptotics (10).
Indeed,
$$
 L(x)\biggr |_{x=x_0(z)+x_1(z) t} \stackrel{\mathrm{def}}{=} \biggl[ K_n(0,0)x_1(z) + \sum\limits_{i=1}^{n-1}
(\alpha_i^{\prime}(0))^2 (K_i(0,0) -
\eqno{(20)}
$$ 
$$-K_{i+1}(0,0)) x_1(z+a_i) + P_1(x_0(z)) \biggr] t + r(t), \,\,\,\,\, r(t) = o(t).$$
Here $P_1(x_0(z))$ is the polynomial of  $z,$ 
which degree is equal to the
multiplicity of solution $j=0$ of  the equation $B(j)=0$
as have been proved.
From the relation (20)  due to $r(t)=o(t)$
for $t\rightarrow 0$ it follows that coefficient $x_1(z)$
have to satisfy the difference equation
$$
K_n(0,0) x_1(z) + \sum\limits_{i=1}^{n-1} (\alpha^{\prime}(0))^2 \bigl(K_i(0,0) - K_{i+1}(0,0)\bigr) x_1(z+a_i)+
\eqno{(21)}
$$
$$+ P_1(x_0(z)) = 0. $$
If $B(1)\neq 0,$ then the equation (21)
has the solution $x_1(z)$ as the same degree polynomial as multiplicity order of the solution $j=0$
of the equation $B(j)=0.$
If $j=1$ is also the solution of the equation $B(j)=0$ the solution $x_1(z)$
can be constructed as polynomial of the power $k_0 +k_1,$
where $k_0$ and $k_1$ are the corresponding multiplicities of the solutions $j=0$ and $j=1$
of equation $B(j)=0$. Coefficient  $x_1(z)$ depends on
$k_0 +  k_1$
arbitrary constants.

Let us introduce the condition
\begin{enumerate} [{\bf (C)}]
\item
Let the equation $B(j)=0$ in the array $(0,1, \dots , N)$
has the solutions  $j_1, \dots , j_{\nu}$ of multiplicities
 $k_i,\, i=\overline{1,\nu}.$
\end{enumerate}

Then, in a similar way we can calculate the remaining coefficients
$x_2(z), \dots , x_N(z)$ of the asymptotic approximation $\hat{x}(t)$
of solution of equation (7) from the following  sequence of difference 
equation
$$K_n(0,0)x_j(z) + \sum\limits_{i=1}^{n-1} (\alpha^{\prime}(0))^{1+j}\bigl(K_i(0,0)-K_{i+1}(0,0)\bigr)x_j(z+a_i)+$$
$$+\mathcal{P}_j(x_0(z), \dots , x_{j-1}(z)))=0, \, j=\overline{2,N}.$$
Thus we have the following lemma\\
\begin{lemma}
Let  conditions {(B)} and {(C)} be fulfilled. Then exists the function
$\hat{x}(t) = \sum\limits_{i=0}^N x_i (\ln t) t^i,$ such as for
$t \rightarrow +0$  the residual solution of equation (7) satisfies the estimate
$$\biggl|F(\hat{x}(t)) - f^{(1)}(t) + K^{(1)}(t,0) \frac{f(0)}{K_1(0,0)}\biggr| = o(t^N).$$

The coefficients $ x_i (\ln t) $ are polynomials of $ \ln t.$
The degrees of these polynomials are increasing  and do not exceed the sum of the multiplicities of $ \sum \limits_j k_ {j} $ of solutions of the equation $ B(j) = 0 $ from the array $ (0,1, \ dots, i). $
The coefficients $ x_i (\ln t) $ depend on $ \sum_ {j = 0} ^ i k_j $
arbitrary constants.

\end{lemma}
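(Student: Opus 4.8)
The plan is to substitute the ansatz (10) into the operator $F$ and to match powers of $t$, treating $z=\ln t$ as an auxiliary variable carried through the computation. First I would record how each of the three constituents of $F$ acts on a single term $x_j(\ln t)\,t^j$. The multiplication part $K_n(t,t)u$ is handled by Taylor-expanding $K_n(t,t)=K_n(0,0)+\mathcal{O}(t)$ via condition {(B)}. For the functional part one uses $\alpha_i(t)=\alpha_i'(0)t+\mathcal{O}(t^2)$, so that $(\alpha_i(t))^j=(\alpha_i'(0))^j t^j(1+\mathcal{O}(t))$ and $\ln\alpha_i(t)=z+a_i+\mathcal{O}(t)$ with $a_i=\ln\alpha_i'(0)$; expanding $x_j$ about $z+a_i$ then shows that $u(\alpha_i(t))$ contributes $(\alpha_i'(0))^j x_j(z+a_i)\,t^j$ at leading order, all corrections being pushed into strictly higher powers of $t$. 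For the integral part, the antiderivative identity for $\int t^j\ln^k t\,dt$ quoted just before the lemma shows that integrating a term of order $t^l$ over an interval of length $\mathcal{O}(t)$ yields a term of order $t^{l+1}$, so the integral operator feeds the coefficient of $t^j$ only from the previously determined $x_0,\dots,x_{j-1}$.

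Collecting the coefficient of $t^j$ produces, for each $j=0,\dots,N$, a linear functional (difference) equation
$$\mathcal{L}_j[x_j](z):=K_n(0,0)x_j(z)+\sum_{i=1}^{n-1}(\alpha_i'(0))^{1+j}\bigl(K_i(0,0)-K_{i+1}(0,0)\bigr)x_j(z+a_i)=g_j(z),$$
where $g_0$ is the right-hand side of (13) and, for $j\geq 1$, $g_j=-\mathcal{P}_j(x_0,\dots,x_{j-1})$ is assembled from the lower coefficients and the Taylor data of condition {(B)}. The crucial step is to identify the action of $\mathcal{L}_j$ on polynomials in $z$: expanding $(z+a_i)^m$ by the binomial theorem and using the identity $\frac{d^l}{dj^l}B(j)=\sum_{i=1}^{n-1}(\alpha_i'(0))^{1+j}a_i^l(K_i(0,0)-K_{i+1}(0,0))$ recorded in the text, one obtains $\mathcal{L}_j[z^m]=\sum_{l\geq 0}\frac{B^{(l)}(j)}{l!}\frac{d^l}{dz^l}z^m$. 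Hence on the space of polynomials $\mathcal{L}_j$ coincides with the constant-coefficient differential operator $\sum_{l\geq 0}\frac{B^{(l)}(j)}{l!}D^l$ with $D=d/dz$.

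From this representation the solvability theory is immediate. If $B(j)\neq 0$, then $\mathcal{L}_j=B(j)\,\mathrm{Id}+(\text{terms lowering the degree})$ is invertible on polynomials, so $\mathcal{L}_j[x_j]=g_j$ has a unique polynomial solution of the same degree as $g_j$ and introduces no new constant. If $j$ is a zero of $B$ of multiplicity $k_j$, then $\mathcal{L}_j=D^{k_j}\circ(\text{invertible})$, so solving amounts to $k_j$ successive integrations: the degree of $x_j$ exceeds that of $g_j$ by exactly $k_j$ and $k_j$ arbitrary constants appear (the kernel of $D^{k_j}$ on polynomials). I would then close the argument by induction on $j$, the base case $x_0$ being precisely the 1st, 2nd and 3rd case analysis given before the lemma. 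In the inductive step $\deg g_j\leq\sum_{l<j}k_l$ since $\mathcal{P}_j$ is built linearly from $x_0,\dots,x_{j-1}$, whence $\deg x_j\leq\sum_{l\leq j}k_l$ and the total number of accumulated arbitrary constants is $\sum_{l=0}^{j}k_l$. This yields both the claimed degree bound and the constant count.

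The residual estimate then follows from the matching: having made the coefficients of $t^0,\dots,t^N$ (with their logarithmic factors) cancel in $F(\hat{x})-f^{(1)}(t)+K^{(1)}(t,0)\frac{f(0)}{K_1(0,0)}$, the remaining terms are exactly the discarded tails, namely the $\mathcal{O}((t+s)^{N+1})$ and $\mathcal{O}(t^{N+1})$ remainders of condition {(B)}, the integral contributions of order $t^{N+1}$, and the truncation beyond $j=N$; since each tail is $\mathcal{O}(t^{N+1}\ln^m t)=o(t^N)$, the estimate holds. The main obstacle I anticipate is not any single display but the bookkeeping that makes the induction close: one must verify that the expansions of $\ln\alpha_i(t)$, $K_n(t,t)$ and the integral operator feed only forward into higher orders, so that the coefficient of $t^j$ isolates $\mathcal{L}_j[x_j]$ plus an inhomogeneity $g_j$ depending only on $x_0,\dots,x_{j-1}$, and that the logarithmic powers do not proliferate beyond the degree $\sum_{l\leq j}k_l$ — this is precisely where the differential-operator representation of $\mathcal{L}_j$ and the careful control of the substitution $\ln\alpha_i(t)=z+a_i+\mathcal{O}(t)$ are essential.
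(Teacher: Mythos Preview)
Your proposal is correct and follows essentially the same strategy as the paper: the lemma is stated in the paper as a summary of the preceding Section~4 discussion, which substitutes the ansatz (10), reduces to the sequence of difference equations $\mathcal{L}_j[x_j]=g_j$ in $z=\ln t$, and solves these as polynomials whose degree and free-constant count are governed by the multiplicity of $j$ as a root of $B$. Your representation of $\mathcal{L}_j$ on polynomials as the constant-coefficient differential operator $\sum_{l\ge 0}\tfrac{B^{(l)}(j)}{l!}D^l$ is a cleaner packaging of exactly the computation the paper carries out concretely via the binomial expansion and the recurrent systems (18)--(19); it unifies the paper's separate 1st/2nd/3rd-case analysis for $x_0$ and makes the induction step transparent, but the underlying identities and the mechanism are identical.
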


\begin{remark}
If $B(j)\neq 0,$ then in the sum $\sum\limits_{j=0}^i k_j$ we zero the corresponding $k_j.$ 
\end{remark}

\section{An existence theorem for continuous parametric solutions families}

Since $0 < \alpha_i^{\prime}(0) <1, \, \alpha_i(0)=0, \, i=\overline{1,n-1},$
then for any  $0<\varepsilon <1$
exists $T^{\prime} \in(0,T]$
such as the following estimates are fulfilled $$\max\limits_{i=\overline{1,n-1}, t\in [0,T^{\prime}]} |\alpha_i^{\prime} (t)| \leq \varepsilon,$$
 $$\sup\limits_{i=\overline{1,n-1}, t\in (0,T^{\prime}]} \frac{\alpha_i(t)}{t} \leq \varepsilon.$$

Let us introduce the condition

\begin{enumerate} [{\bf (D)}]
\item Let $K_n(t,t) \neq 0$ for  $t\in [0,T^{\prime}]$ and $N^*$
is chosen so large that the following equality
$$\sup\limits_{t\in (0,T^{\prime})} \varepsilon^{N^*} |A(t)|  \leq q <1
$$
is fulfilled, where function $A(t)$ is defined the Section 3 with formula ($\ast$).
\end{enumerate}
 
\begin{lemma}
Let the condition { (D)} be fulfilled. Let in $\mathbb{C}_{(0,T^{\prime})}$
class of continuous functions for $t \in (0,T^{\prime}]$
which have the limit (which could be infinite) for $t \rightarrow +0$
exists an element $\hat{x}(t)$ such as for $t\rightarrow +0$
error of the solution of equation (7) satisfy the estimate
$$
\bigl|F(\hat{x}(t)) - f^{\prime}(t) + K_1^{\prime}(t,0) \frac{f(0)}{K_1(0,0)}\bigr| = o(t^N), $$
where $N \geq N^*.$
Then equation (7) in $\mathbb{C}_{(0,T^{\prime})}$
has the solution
$$x(t) = \hat{x}(t) +t^N v(t),  \eqno{(22)}$$
where $v(t)$ is uniquely determined by successive approximations.
\end{lemma}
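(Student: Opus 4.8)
The plan is to insert the ansatz (22) into the integral-functional equation (6) (equivalently (7)) and to derive from it a second-kind equation for the correction $v(t)$ which is solvable by the very scheme of Theorem 3.1 — the step method combined with successive approximations — the new and decisive ingredient being a contraction factor $\varepsilon^N$ furnished by condition (D). In contrast to Section 3 we do \emph{not} have $|A(0)|<1$ here, so the smallness that drives the iteration will have to come from the explicit factor $t^N$ in (22), and this is exactly what condition (D) is designed to supply.

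First I would substitute $x=\hat{x}+t^N v$ into $x+Ax+Kx=\overline{f}$. By linearity and the operator identity $K_n(t,t)(I+A+K)\hat{x}=F(\hat{x})$, the terms in $\hat{x}$ collapse onto the residual
\[
\overline{f}(t)-(I+A+K)\hat{x}(t)=-K_n^{-1}(t,t)\Bigl(F(\hat{x}(t))-f'(t)+K_1'(t,0)\tfrac{f(0)}{K_1(0,0)}\Bigr),
\]
which by hypothesis together with $K_n(t,t)\neq 0$ on $[0,T']$ (condition (D)) is $o(t^N)$. Hence $v$ must satisfy $(I+A+K)(t^N v)=o(t^N)$, i.e.
\[
t^N v(t)+A(t^N v)+K(t^N v)=o(t^N).
\]
Dividing by $t^N$ and setting $\widetilde{A}_N v:=t^{-N}A(t^N v)$, $\widetilde{K}_N v:=t^{-N}K(t^N v)$ turns the functional part into $\sum_i K_n^{-1}(t,t)\alpha_i'(t)\{K_i-K_{i+1}\}(\alpha_i(t)/t)^N v(\alpha_i(t))$ and the integral part into $\sum_i\int_{\alpha_{i-1}(t)}^{\alpha_i(t)}K_n^{-1}(t,t)K_i^{(1)}(t,s)(s/t)^N v(s)\,ds$, while the right-hand side becomes a function $g(t)=o(1)$ that is continuous on $(0,T']$ and tends to $0$. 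Thus $v$ solves the second-kind equation $v+\widetilde{A}_N v+\widetilde{K}_N v=g$.

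The two structural facts I would then exploit are these. Because $\alpha_i(t)/t\le\varepsilon$ on $(0,T']$, the shift operator $\widetilde{A}_N$ carries the factor $(\alpha_i(t)/t)^N\le\varepsilon^N$, so by condition (D) with $N\ge N^*$ its norm is $\le\sup_t\varepsilon^N|A(t)|\le q<1$; and since $\alpha_{i-1}(t)\le s\le\alpha_i(t)\le t$ we have $(s/t)^N\le 1$, so $\widetilde{K}_N$ is an ordinary Volterra operator with a bounded continuous kernel. I would then solve $v+\widetilde{A}_N v+\widetilde{K}_N v=g$ exactly as in Theorem 3.1. On a first subinterval $[0,h]$ chosen small enough that $q+\|\widetilde{K}_N\|_{\mathbb{C}_{[0,h]}}<1$, the operator $\widetilde{A}_N+\widetilde{K}_N$ is a contraction (note $\alpha_i(t)\le\varepsilon h\le h$, so the shift stays inside $[0,h]$), so successive approximations converge uniformly to a unique $v$ on $[0,h]$; on each later interval $I_m$ of the step partition the inclusion $\alpha_i(t)\in\bigcup_{k<m}I_k$ makes the $\widetilde{A}_N$-term a known function, leaving a genuine Volterra equation of the second kind whose successive approximations always converge. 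Marching through $N\ge T'/h$ steps produces $v\in\mathbb{C}_{(0,T')}$, and since $\hat{x}$ lies in the admissible class and $t^N v(t)\to 0$, the function $x=\hat{x}+t^N v$ is the asserted solution of (7).

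The main obstacle is the first interval. There the perturbed arguments $\alpha_i(t)$ do not drop below $[0,h]$, so one cannot invoke the step method to reduce to a pure Volterra equation and must genuinely contract the mixed functional-integral operator. This is precisely where condition (D) enters: since $|A(0)|<1$ is unavailable in the irregular setting of Sections 4--5, the contraction of $\widetilde{A}_N$ has to be produced by $\varepsilon^N$ beating $|A(t)|$, and the delicate point is to verify that the choice $N\ge N^*$ secures $\|\widetilde{A}_N\|\le q<1$ \emph{uniformly} in $t$, so that $h$ can subsequently be taken small enough to absorb the Volterra part and close the contraction.
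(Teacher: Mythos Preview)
Your substitution and derivation of the second-kind equation for $v$ are exactly those of the paper: you arrive at $v+\widetilde A_N v+\widetilde K_N v=g$ with the factors $(\alpha_i(t)/t)^N$ and $(s/t)^N$, and you correctly identify that condition~(D) forces $\|\widetilde A_N\|\le q<1$ uniformly on $(0,T']$, while the residual hypothesis makes $g$ continuous with $g(0)=0$. Up to this point the two arguments coincide.

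Where you diverge is in how you solve the reduced equation. The paper does \emph{not} repeat the step method of Theorem~3.1. Instead it equips $\mathbb C_{[0,T']}$ with the weighted norm $\|v\|_l=\max_{0\le t\le T'}e^{-lt}|v(t)|$; condition~(D) already gives $\|\widetilde A_N\|\le q<1$ in this norm for every $l\ge 0$, and the standard Volterra trick then makes $\|\widetilde K_N\|\le q_1<1-q$ once $l$ is large, so $\widetilde A_N+\widetilde K_N$ is a contraction on the whole interval at once and a single sequence of successive approximations converges. Your route---shrink $h$ to absorb the integral part on $[0,h]$, then march forward freezing the functional term---is also correct and gives the same $v$, but it is more laborious: you must manage the step partition and check the matching at the endpoints, whereas the weighted-norm device dispatches existence and uniqueness in one stroke. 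The trade-off is that your argument reuses machinery already built in Section~3, while the paper's introduces a new (but very short) norm argument tailored to the situation where $|A(0)|<1$ may fail.
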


\begin{proof}
Substitution of (22) in equation (7) gives us 
the following integral-functional equation for 
determination of the function 
$v(t)$ 
$$
v(t) + K_n(t,t)\biggl\{ \sum\limits_{i=1}^{n-1} \alpha_i^{\prime}(t)
\biggl(\frac{\alpha_i(t)}{t} \biggr)^{N^*}  \biggl( K_i(t, \alpha_i(t))  -
\eqno{(23)}
$$
$$
-K_{i+1}(t,\alpha_i(t)) \biggr) v(\alpha_i(t)) +\sum\limits_{i=1}^n
\int\limits_{\alpha_{i-1}(t)}^{\alpha_{i}(t)} K_i^{(1)}(t,s) \biggl(\frac{s}{t}\biggr)^{N^*} v(s) \, ds
\biggr\}  = $$ $$ =\biggl\{f^{\prime} (t) - \frac{\partial K_1(t,0)}{\partial t} \frac{f(0)}{K_1(0,0)} -
F(\hat{x}(t)) \biggr\} (t^{N^*}K_n(t,t))^{-1}.
$$
Let us introduce the linear operators
$$
Mu \stackrel{\mathrm{def}}{=}  K_n^{-1}(t,t)  \sum\limits_{i=1}^{n-1}
\alpha_i^{\prime}(t) \biggl(\frac{\alpha_i(t)}{t} \biggr)^{N^*}
\biggl \{ K_i(t, \alpha_i(t))  -$$
$$-K_{i+1}(t,\alpha_i(t)) \biggr\} v(\alpha_i(t)), $$
$$K v \stackrel{\mathrm{def}}{=} \sum\limits_{i=1}^n \int\limits_{\alpha_{i-1}(t)}^{\alpha_i(t)}
K_n^{-1}(t,t) K_i^{(1)}(t,s) (s/t)^{N^*} v(s)\, ds. $$
Then equation  (23) can be presented as following operator equation
$$u+ ( M+K) u  = \gamma(t), $$
where $\gamma(t) $ is the right hand side of the equation (23). This function
is continuous due to the condition of the Lemma 2.
Let us introduce the Banach space $X$ of continuous
functions  $v(t)$ with norm $$ ||v||_l = \max\limits_{0\leq t \leq T^{\prime}} e^{-lt} | v(t)|, \, l>0.$$
Then due to the inequalities $\sup\limits_{t \in (0,T^{\prime}]} \frac{\alpha_i(t)}{t} \leq \varepsilon <1$
and due to the condition { (D)} for $\forall l \geq 0$ 
norm of a linear function of the operator $ M $ satisfies
$$||M ||_{\mathcal{L}(X \rightarrow X)} \leq q < 1.$$
In addition, for the integral operator $K$
for sufficiently large  $l$ the following estimate is correct
$$||K ||_{\mathcal{L}(X \rightarrow X)} \leq q_1 < 1-q.$$
For sufficiently large $l>0$ this implies that
$$||M+K ||_{\mathcal{L}(X \rightarrow X)} < 1,$$
i.e. the linear operator $M+K$ 
is a contraction operator in the space $X.$
Hence the sequence $\{v_n \}$ converge
where $v_n = -(M+K)v_{n-1} + \gamma(t), \, v_0=\gamma(t)$.
This completes the proof of the theorem.
\end{proof}

   \begin{theorem}
Let the following conditions be fulfilled { (B)}, { (C)}, { (D)}, $f(0)\neq 0$, $K_1(0,0) \neq 0.$ 
Then equation (3) for $0<t\leq T^{\prime} \leq T$ has the solution
$$ x(t) = \frac{f(0)}{K_1(0,0)}\delta(t) + \hat{x}(t) + t^{N^*} v(t), $$
which depends on $\sum\limits_{i=1}^{\nu}  k_i$
arbitrary constants, where  $k_i$ are determined in the condition (C).
Function $\hat{x}$ is 
constructed in the form of  (10), then $v(t) $ is uniquely determined
with successive approximations. And we have the following asymptotic estimate 
 $\biggl|x(t) - \frac{f(0)}{K_1(0,0)} \delta(t) - \hat{x}(t)\biggr| = \mathcal{O}(t^{N^*})$
for $t\rightarrow +0.$
\end{theorem}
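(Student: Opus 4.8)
The plan is to obtain the statement as a synthesis of the two lemmas already established together with the reduction of Section 2; the only genuine work is to make the two expansion orders $N$ and $N^*$ mesh and to track the free parameters through the construction. I would begin by recalling that, by Section 2, producing a generalized solution of (3) is equivalent to writing it as $\frac{f(0)}{K_1(0,0)}\delta(t)+x(t)$, where the regular part $x(t)$ solves the integral-functional equation (7). Matching the coefficients of $\delta(t)$ forces the singular coefficient to equal exactly $f(0)/K_1(0,0)$, and since $K_1(0,0)\neq 0$ no higher-order terms $\delta^{(i)}(t)$ can enter the homogeneous part. Thus it suffices to produce a solution $x(t)$ of (7) on $(0,T^{\prime}]$ of the form $\hat{x}(t)+t^{N^*}v(t)$.

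Next I would invoke the asymptotic-construction Lemma of Section 4. Under conditions (B) and (C) it supplies a function $\hat{x}(t)=\sum_{j=0}^{N}x_j(\ln t)\,t^j$ whose coefficients are obtained recursively from the difference equations (13), (21) and their successors, and whose residual satisfies
$$\left| F(\hat{x}(t)) - f^{\prime}(t) + K^{(1)}(t,0)\frac{f(0)}{K_1(0,0)} \right| = o(t^N), \qquad t\to +0.$$
Here I would fix the Taylor order $N$ in (B) to be at least the integer $N^*$ of condition (D); this is legitimate since (B) provides the polynomial expansions to an arbitrary order. By the same Lemma each coefficient $x_j(\ln t)$ is a polynomial in $\ln t$ carrying $\sum_{\ell=0}^{j}k_\ell$ arbitrary constants, where $k_\ell$ is the multiplicity of $\ell$ as a zero of $B(j)$ (taken to be $0$ when $B(\ell)\neq 0$). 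Summed over the whole expansion, exactly $\sum_{i=1}^{\nu}k_i$ independent constants are introduced, one block of size $k_i$ at each integer zero $j_i$ of $B$, and these are not multiplied as $j$ grows.

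I would then feed this $\hat{x}(t)$ into the refinement Lemma of Section 5. Because $N\geq N^*$ the residual above is in particular $o(t^{N^*})$, so after division by $t^{N^*}K_n(t,t)$ (legitimate by $K_n(t,t)\neq 0$ from (D)) the right-hand side $\gamma(t)$ of equation (23) is continuous on $(0,T^{\prime}]$ with a finite limit at the origin, which is precisely the hypothesis of that Lemma. The Lemma then applies unchanged: in the weighted Banach space $X$ with norm $\|v\|_l=\max_{0\le t\le T^{\prime}}e^{-lt}|v(t)|$, condition (D) yields $\|M\|_{\mathcal{L}(X\to X)}\leq q<1$, while $\|K\|_{\mathcal{L}(X\to X)}\leq q_1<1-q$ once $l$ is large, so $M+K$ is a contraction and $v(t)$ is determined uniquely by the iteration $v_n=-(M+K)v_{n-1}+\gamma(t)$, $v_0=\gamma(t)$. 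Hence $x(t)=\hat{x}(t)+t^{N^*}v(t)$ solves (7) on $(0,T^{\prime}]$.

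Assembling the pieces, the full generalized solution of (3) is $\frac{f(0)}{K_1(0,0)}\delta(t)+x(t)$, exactly the object displayed in the statement, and it carries precisely the $\sum_{i=1}^{\nu}k_i$ constants inherited from $\hat{x}$, because $v$ is uniquely pinned down once those constants are chosen. The asymptotic estimate is then immediate: the difference $x(t)-\frac{f(0)}{K_1(0,0)}\delta(t)-\hat{x}(t)$ equals $t^{N^*}v(t)$, and $v\in X$ is bounded near the origin, so this is $\mathcal{O}(t^{N^*})$ as $t\to+0$. The one point that must be handled with care, and the place where the two sections are forced to interlock, is the alignment of indices: one must take $N\ge N^*$ so that the $o(t^N)$ residual of the asymptotic step survives division by $t^{N^*}$ and keeps $\gamma$ continuous, and one must check that the contraction step neither creates nor destroys arbitrary constants, so that the final count is exactly $\sum_{i=1}^{\nu}k_i$.
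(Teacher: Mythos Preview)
Your proposal is correct and follows exactly the same approach as the paper: invoke Lemma~1 (under (B), (C)) to build the log-power asymptotic $\hat{x}(t)$ carrying the $\sum_{i=1}^{\nu}k_i$ free constants, then feed it into Lemma~2 (under (D)) to determine $v(t)$ uniquely by successive approximations, and read off the $\mathcal{O}(t^{N^*})$ estimate from $x-\hat{x}=t^{N^*}v$. Your write-up is in fact more careful than the paper's own proof, which omits the explicit alignment $N\ge N^*$ and the bookkeeping of constants that you spell out.
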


\begin{proof}
Based on the Lemma 1 because of the conditions of the theorem is possible to construct an asymptotic approximation of the regular part
$ \hat{x}(t) $ of the solution in the form of the following log-power polynomial:
 $$\sum\limits_{i=0}^N x_i (\ln t) t^i.$$
In this case, by construction, the coefficients $x_i(\ln t)$ 
depend on the certain number of arbitrary constants.
Due to the lemma 2 the substitution $x(t) = \hat{x}(t)+
t^{N^*} u(t)$ enable the construction of the continuous function 
$u(t)$ using the successive approximations method. 
\end{proof}

The solution constructed on
$[0,T^{\prime}]$  can be extended on the whole interval 
$[0,T],$ based on known ``step method'' [13, c.199].

In simple cases one can use the solution of  the equivalent equation (4)
in order to construct the 
solution of the integral equation (3) in closed form.

\begin{example}
$$\int\limits_0^{t/2} x(s) ds + 2 \int\limits_{t/2}^{t} x(s) ds = 2+t, \, t>0.$$
An equivalent equation (4) in this example
has the following form
$ -\frac{t}{2} x(\frac{t}{2}) + 2 x(t) = 2\delta(t) + 1.$
The desired solution is as follows
  $x(t) = 2 \delta(t) + 2/3.$
\end{example}
  
\begin{example}
$$\int\limits_0^{t/2} x(s) ds - \int\limits_{t/2}^{t} x(s) ds = 1+t, \, t>0.$$
Here an equivalent equation is 
as follows
$ x(\frac{t}{2}) - x(t) = \delta(t) + 1.$
It has  $c$--parametric family of generalized solutions
  $x(t) = \delta(t) + c - \frac{\ln t}{\ln 2}, \, c $ is constant.
\end{example}

\end{document}